\newtheorem{theorem}{Theorem}
\newtheorem{lemma}[theorem]{Lemma}
\newtheorem{prop}[theorem]{Proposition}
\newtheorem{corollary}[theorem]{Corollary}
\newcommand{\ket}[1]{\left|#1\right\rangle}
\newcommand{\bra}[1]{\left\langle#1\right|}
\newcommand{\expect}[1]{\langle#1\rangle}
\newcommand{\tr}[1]{\mathrm{tr}\left(#1\right)}
\newcommand{\spec}[1]{\mathrm{spec}\left(#1\right)}
\newcommand{\um}{\mathfrak{u}(m)}
\newcommand{\imum}{d\varphi(\mathfrak{u}(m))}
\newcommand{\imsum}{d\varphi(\mathfrak{su}(m))}
\newcommand{\uM}{\mathfrak{u}(M)}
\newcommand{\Um}{\mathrm{U}(m)}
\newcommand{\UM}{\mathrm{U}(M)}
\newcommand{\imUm}{\varphi(\mathrm{U}(m))}
\newcommand{\dphi}{d\varphi}
\newcommand{\dphiinv}{d\varphi^{-1}}
\newcommand{\udag}{U^\dagger}
\begin{document}

\title{Lie algebraic invariants in quantum linear optics}

\author[1,4]{Pablo V. Parellada}
\email{pablo.veganzones@uva.es}
\author[2]{Vicent Gimeno i Garcia}%
\author[2]{Julio Jos\'e Moyano-Fern\'andez}
\author[3,4]{Juan Carlos Garcia-Escartin}

\affil[1]{Universidad de Valladolid, Dpto. F\'isica Te\'orica, At\'omica y \'Optica, 47011 Valladolid, Spain.}
\affil[2]{%
Universitat Jaume I, Dpto.~de Matem\`atiques \& IMAC, E-12071, Castell\'{o}, Spain.
}%
\affil[3]{Universidad de Valladolid, Dpto.~Teor\'ia de la Se\~{n}al e Ing.~Telem\'atica, Paseo Bel\'en n$^o$ 15, 47011 Valladolid, Spain.}
\affil[4]{Laboratory for Disruptive Interdisciplinary Science (LaDIS), Universidad de Valladolid, 47011 Valladolid, Spain.}

\date{5 June 2026}

\begin{abstract}
Quantum linear optics without post-selection is not powerful enough to produce any quantum state from a given input state. This limits its utility since some applications require entangled resources that are difficult to prepare. Thus, a deeper understanding of linear optical state preparation is needed. In this work, we give a recipe to derive conserved quantities in the evolution of arbitrary states along any possible passive linear interferometer. One example of such an invariant is the spectrum of a density matrix mapped onto the Lie algebra of passive linear optical Hamiltonians. These invariants give necessary conditions for exact state preparation: if the input and output states have different invariants, it is impossible to design a passive linear interferometer that evolves one into the other. Moreover, we provide a lower bound to the distance between an output and target state based on the distance between their invariants. This gives a necessary condition for approximate or heralded state preparations. Therefore, the invariants allow us to narrow the search when trying to prepare useful entangled states, like NOON states, from easy-to-prepare states, like Fock states. We conclude that future exact and approximate state preparation methods will need to consider the necessary conditions given by our invariants to weed out impossible linear optical evolutions.

\end{abstract}
\maketitle

\section{Introduction}

Passive linear interferometers, like the Michelson interferometer, are simple devices that can be readily built in the lab and have ubiquitous applications in physics. They consist of a set of modes in which light can propagate (distinct spatial paths, orthogonal polarizations...) and some elements that create an interaction between the light in the modes (beam splitters and phase shifters).

Even though linear interferometers arose as classical optical devices, they can be studied from the quantum formalism, in which they exhibit surprising properties like the Hong-Ou-Mandel effect \cite{hong_measurement_1987}. Quantum linear optics has brought lots of applications in quantum metrology \cite{barbieri_optical_2022}, quantum communication \cite{duan_long-distance_2001} and quantum computation \cite{kok_linear_2007}.

However, linear optics alone is not powerful enough to perform universal quantum computation, since it cannot implement all unitary gates — except for an exponential number of modes and one photon \cite{cerf_optical_1998, moyano-fernandez_linear_2017}. To perform optical quantum computing, some kind of non-linearity is generally needed, like measurement \cite{knill_scheme_2001}, or exotic resources, like GKP states \cite{konno_logical_2024}. This might seem a bit disappointing, but Aaronson and Arkhipov proved that linear optics alone can solve some problems — namely \textit{boson sampling} — much faster than classical computers \cite{AA13}, giving rise to the first ``quantum advantage'' that could be tested experimentally \cite{brod_photonic_2019}. This result positioned quantum linear optical devices as an intermediate candidate between a classical computer and a fully-fledged quantum computer (see, for example, photonic integrated circuits \cite{PMS14,EPS20,bogaerts_programmable_2020,TMS21,HPG22,TAdG23}).

As we have said, not all quantum evolutions can be implemented with passive linear optics. This raises the question: given two arbitrary photonic states, is there a linear optical evolution connecting them? This state preparation problem is especially interesting since many useful applications of linear optics begin with some entangled input state. While giving a sufficient condition is tough and has not been answered yet in the general case (for two modes or two photons see \cite{MRO14, Entanglement-2001}), there are necessary conditions that must be fulfilled. The simplest one is the conservation of the mean number of photons: if the states have different mean values, the transition is impossible. Other invariants were previously proposed for a fixed number of photons \cite{MRO14,PGM23}.

In this article, we use the Lie algebras describing passive linear optics to define other quantities that must be conserved if there is a linear optical evolution connecting two states. That is, each state can be assigned different invariants and, if any of them do not coincide, there cannot exist a passive interferometer connecting them. Additionally, we show that for many of those invariants, if we want to reach a state close in trace distance to a given target, the input and target states must also have close invariants.

Some of the invariants we propose can be measured experimentally and have been recently measured by two independent groups \cite{rodari_observation_2025, yang_experimental_2025}, with a possible application to the verification of boson sampling mentioned in \cite{rodari_observation_2025}.

Readers not particularly interested in linear optics can also find useful mathematical tools in this article. For example, the adjoint action and the mapping of quantum states onto subalgebras — which we use to derive the invariants — have been used recently to study barren plateaus in variational quantum circuits \cite{fontana_characterizing_2024, ragone_lie_2024}.

The article is organized as follows. In Section \ref{sec: Mathematical preliminaries}, we establish the mathematical description of passive linear optics, which is necessary to derive the invariants. In Section \ref{sec: invariants}, we propose several invariant quantities under linear optical evolution. In Section \ref{sec: examples}, we provide examples of application of the invariants. In Section \ref{sec: approximate state preparation}, we introduce the notion of distance between invariants and relate it to the distance between quantum states, with a possible application to studying heralded preparations. In Section \ref{sec: code}, we discuss the software implementation of the invariants. Finally, in Section \ref{sec: discussion}, we draw some conclusions and outline future lines of research.

\section{Mathematical preliminaries}\label{sec: Mathematical preliminaries}

\subsection{Lie groups}\label{sec: lie groups}

A matrix Lie group is a group of matrices that can be parametrized smoothly by real numbers with the property that the product and inverse of matrices are smooth functions of these parameters. One example is the Lie group of $m\times m$ unitary matrices, which is denoted by $\Um$.

In classical linear optics, unitary matrices describe the evolution (or scattering) of the electromagnetic field through a passive interferometer. If the interferometer can propagate $m$ modes of the electromagnetic field, the evolution (or scattering) of the electromagnetic field amplitudes in each mode is described by a unitary matrix in $\Um$ called the \textit{scattering matrix}. One possible parametrization of scattering matrices is via the angles of the beam splitters and phase shifters that make up the interferometer \cite{reck_experimental_1994}.

In quantum optics, the scattering matrix describes the evolution a single photon; that is, the evolution of the creation, $a_i^\dagger$, and annihilation, $a_i$, operators \cite{aniello_exploring_2006}:
\begin{equation}
     a_i^\dagger \rightarrow \sum_{j=1}^m S_{ij}^\ast a_j^\dagger
     \:, \qquad 
     a_i \rightarrow \sum_{j=1}^m S_{ij} a_j
     \:.
\end{equation}
The unitarity of the scattering matrix $S$ implies that passive linear optics conserves the total number of photons.

The evolution of a quantum state with multiple photons along an interferometer is given by a unitary operator $U$ acting on Fock space, $\mathcal{F}$. This operator acts on the Fock basis, $\ket{n_1 \ldots n_m}$, by evolving the creation operator in each mode with $S$ \cite{skaar_quantum_2004}:
\begin{equation}\label{eq: heisenberg evolution}
    U \ket{n_1 \ldots n_m} = \prod_{k=1}^m \frac{1}{\sqrt{n_k!}}\left(\sum_{j=1}^m S_{jk}a_j^\dagger\right)^{n_k} \ket{0\ldots 0} \:.
\end{equation}
(An alternative way to compute $U$ from $S$ is using the permanents \cite{scheel_permanents_2004}). This relation between $S$ and $U$ is given by a unitary representation of $U(m)$ on Fock space \cite{aniello_exploring_2006}:
\begin{equation}\label{eq: group homomorphism fock space}
    \begin{aligned}
        \varphi: \:\: \mathrm{U}(m) \:&\rightarrow\: \mathrm{U}(\mathcal{F}) \\
        S \: &\mapsto\: U = \varphi(S)  \:,
    \end{aligned}
\end{equation}
where $\mathrm{U}(\mathcal{F})$ is the group of unitary operators acting on Fock space. A representation is, by definition, a group homomorphism: a map between groups such that
\begin{equation}
    \varphi(S_2 S_1) = \varphi(S_2)\varphi(S_1) \:.
\end{equation}
Physically, this means that evolving a quantum state through an interferometer with scattering matrix $S_2 S_1$ is equivalent to evolving it first through interferometer $S_1$ and then through interferometer $S_2$.

Since linear optics preserves the total number of photons, this representation is known to be irreducible (has no non-trivial invariant subspaces) in each subspace of $n$ photons in $m$ modes \cite{aniello_exploring_2006}. This Hilbert space, denoted by $\mathcal{H}_{n,m}$, is spanned by Fock states $\ket{n_1 \ldots n_m}$ such that with $n_1 + \cdots + n_m = n$. Hence, its dimension is
$$
\dim_{\mathbb{C}} \mathcal{H}_{n,m} = \binom{m+n-1}{n} \equiv M \:,
$$
which corresponds to the number of ways one can distribute $n$ photons in $m$ modes. In summary, the infinite dimensional Fock space can be decomposed into finite dimensional invariant subspaces, $\mathcal{F} = \bigoplus_{n=0}^\infty \mathcal{H}_{n,m}$. When restricted to $\mathcal{H}_{n,m}$, the representation \eqref{eq: group homomorphism fock space} is a map between finite dimensional unitary groups
\begin{equation}\label{eq: group homomorphism}
    \begin{aligned}
        \varphi: \:\: \mathrm{U}(m) \:&\rightarrow\: \mathrm{U}(M) \\
        S \: &\mapsto\: U = \varphi(S)  \:,
    \end{aligned}
\end{equation}
where $\UM$ is the group of unitary matrices that describes the evolution of quantum states in $\mathcal{H}_{n,m}$. This map is not surjective because not all quantum evolutions can be realized with linear optics \cite{moyano-fernandez_linear_2017}. Thus, linear optical unitaries form a subgroup of $\UM$, denoted by $\imUm$, with the same dimension as $\Um$.

\subsection{Lie algebras}

The Lie algebra $\mathfrak{g}$ of a matrix Lie group $G$ is the tangent space at the identity; or, alternatively, the set of matrices $X$ such that $e^{X}$ is in $G$. Lie algebras are equipped with a \textit{Lie bracket}: a bilinear and antisymmetric map $[\cdot,\cdot ]: \mathfrak{g} \times \mathfrak{g} \rightarrow \mathfrak{g}$ that satisfies the Jacobi identity \cite{hall_lie_2015}.

For example, the Lie algebra of the unitary group is the vector space of anti-Hermitian matrices with the commutator as a Lie bracket. In fact, this algebra of anti-Hermitian matrices is isomorphic to the algebra of Hermitian matrices: just multiply an anti-Hermitian matrix by $-i$ (to get a Hermitian matrix) and multiply the commutator by $i$ (to get a Lie bracket that closes the algebra). The exponential that maps the algebra to the unitary group is also modified with an $i$ factor: $e^{iH}$ is unitary, where $H$ is Hermitian. For convenience, throughout this work we choose to define $\um$ as the Lie algebra of Hermitian matrices.

While the unitary group $\UM$ describes quantum evolution operators, its algebra, $\uM$, describes their corresponding Hamiltonians. In particular, the Hamiltonians of passive linear optical unitaries (a subgroup of $\uM$) are given by the operators $H = \sum_{j k} h_{j k} a_j^\dagger a_k$, where $h$ is Hermitian, restricted to the subspace of a fixed number of photons $\mathcal{H}_{n,m}$ \cite{leonhardt_explicit_2003, aniello_algebraic_2005, aniello_exploring_2006, GGM18}. In fact, the map from $h$ to $H$ is a Lie algebra homomorphism \cite{GGM18},
\begin{equation}\label{eq: algebra homomorphism}
    \begin{aligned}
        d\varphi: \:\: \um \:&\rightarrow\: \uM \\
        h \: &\mapsto\: H = \sum_{j k} h_{j k} a_j^\dagger a_k  \:,
    \end{aligned}
\end{equation}
which means that $d\varphi$ is a linear map such that $d\varphi(i[h_1, h_2]) = i[d\varphi(h_1), d\varphi(h_2)]$. The image of $\dphi$, denoted by $\imum$, is a subalgebra of $\uM$ and it is precisely the Lie algebra of the linear optical unitaries, $\varphi(\Um)$.

To understand the physical meaning of $h$, one can restrict the Hamiltonian to the subspace of a single photon. In this case, $H = \dphi(h) = h$ is a Hamiltonian for the unitary evolution of a single photon, which is given by $\varphi(S)=S$. Thus, $h$ is related to the scattering matrix of the interferometer by the exponential map: $S = e^{ih}$. We summarize the group and algebraic structure of passive linear optics with the following commutative diagram (see e.g. \cite{GGM18}):
\begin{center}
    \begin{tikzcd}
    h\in\mathfrak{u}(m) \arrow{r}{d \varphi}\arrow{d}[swap]{\exp} &  H =  \sum_{ij} {h_{ij}{a}_i^\dagger {a}_j} \arrow{d}{\exp} \\
    S=e^{ih}\in {\rm U}(m)\arrow{r}[swap]{ \varphi} & U = e^{i\sum{h_{ij}{a}_i^\dagger {a}_j}}
    \end{tikzcd} \:\:.
\end{center}

The invariants that we derive later on in this article arise from expectation values of a basis of linear optical Hamiltonians. However, not all bases are valid; as we shall see, we need one that is the image of an orthonormal basis of $\mathfrak{u}(m)$. An example of such a basis is given by (using the notation of \cite{rodari_observation_2025}):
\begin{align}
    &b^x_{jk}=\frac{1}{\sqrt{2}}(\ket{j}\bra{k}+\ket{k}\bra{j}) &{\rm for}\quad 1 \leq j < k \leq m, \nonumber \\
    &b^y_{jk}=\frac{i}{\sqrt{2}}(\ket{j}\bra{k}-\ket{k}\bra{j}) &{\rm for}\quad 1 \leq j < k \leq m, \label{eq: basis algebra} \\
    &b^z_j \:=\:\ket{j}\bra{j}  &{\rm for} \quad j= 1,\ldots , m. \quad \nonumber
\end{align}
We will call this basis $\{b_i\}$ for short. Taking its image under $\dphi$ we obtain a basis of $\imum$, which we denote by $\{O_i\}$:
\begin{align}
    &O_{jk}^x=\frac{1}{\sqrt{2}}\left( a^\dag_{j} a_k+ a^\dag_{k} a_j\right) &{\rm for}\quad 1 \leq j < k \leq m, \nonumber \\
    &O_{jk}^y=\frac{i}{\sqrt{2}}\left( a^\dag_{j} a_k- a^\dag_{k} a_j\right) &{\rm for}\quad 1 \leq j < k \leq m, \label{eq: basis image algebra} \\
    &O_j^z= n_j= a^\dag_{j} a_j  &{\rm for} \quad j= 1,\ldots , m, \quad \nonumber
\end{align}
where the operators can be restricted, if needed, to the subspace of a fixed number of photons, $\mathcal{H}_{n,m}$. In the special case of having just one photon, $d\varphi$ is the identity map and $\{O_i\}$ is the same basis as $\{b_i\}$. This gives a physical interpretation of $\{b_i\}$ as a basis of single-photon Hamiltonians.

This basis is especially interesting because the expectation values of these operators can be measured experimentally. The operators correspond to photon number and angular momentum observables \cite{YMK86,campos_quantum-mechanical_1989} and measurement can be done with photon counting and a homodyne detection setup \cite{NK06}. First, measuring $a^\dag_{i} a_i$ corresponds to counting the number of photons in the $i$-th mode (although experimentally it is not trivial). Then, operators $(a^\dag_{j} a_k+ a^\dag_{k} a_j)/\sqrt{2}$ correspond to a 50:50 beam splitter in modes $j$ and $k$, and measuring $n_j - n_k$ after. Finally, measuring $i(a^\dag_{j} a_k - a^\dag_{k} a_j)/\sqrt{2}$ is the same as measuring the previous operator, but adding a $\pi/2$ phase shifter in mode $k$ before the beam splitter.

\subsection{Adjoint action}

Given a Lie group $G$, the conjugation by an element $g\in G$ is an automorphism in $G$: $C_g(h) = ghg^{-1} \in G \:.$
If $\mathfrak{g}$ is the Lie algebra of $G$, the differential of $C_g$ at the identity is a map in $\mathfrak{g}$ called the adjoint map: ${\rm Ad}_g: \mathfrak{g} \mapsto \mathfrak{g}$. For matrix Lie groups, it is given by \cite[Definition 3.32]{hall_lie_2015}:
\begin{equation}
    {\rm Ad}_{g}(X) = g X g^{-1} \:,
\end{equation}
where $g\in G$ and $X \in \mathfrak{g}$. This map defines an action of the Lie group $G$ over its Lie algebra $\mathfrak{g}$ called the adjoint action, ${\rm Ad}: G \times \mathfrak{g} \rightarrow \mathfrak{g}$. In the Schrödinger picture of quantum mechanics, one finds this action describing the evolution of a density matrix $\rho$ by a unitary $U$:
\begin{equation}
    \rho \mapsto U \rho U^\dag \:,
\end{equation}
so it is natural to study this action to find invariants under linear optical evolution. In the Heisenberg picture, where observables—rather than states—evolve with a unitary operator, the adjoint action also describes the evolution of these observables:
\begin{equation}
    A \mapsto U^\dag A U \:.
\end{equation}
(Notice that $U$ and $U^\dag$ are interchanged with respect to the Schrödinger picture).

Let us describe first the adjoint action of a single-photon unitary, $S\in \Um$, acting on the basis of single-photon Hamiltonians \eqref{eq: basis algebra},  $\{b_i\}\in \um$:
\begin{equation}\label{eq: adjoint action}
    S^\dag b_i S = \sum_{j=1}^{m^2}C_{ij} b_j \:,
\end{equation}
where $C_{ij} \in \mathbb{R}$. In fact, since $\{b_i\}$ is an orthonormal basis, $C$ must be an orthogonal matrix:
\begin{equation}
    \delta_{ij} = 
    {\rm tr}\left(b_i b_j\right)= 
    {\rm tr}\left(S^\dag b_i S S^\dag b_j S\right) = 
    {\rm tr}\left(\sum_{i'j'}C_{ii'}C_{jj'} b_{i'}b_{j'} \right) = 
    \sum_{i'}C_{ii'}C_{ji'} = (C C^T)_{ij} \:.
\end{equation}

To obtain the adjoint action of a linear optical unitary, $U=\varphi(S) \in  \Um$, on the basis of multiphoton Hamiltonians \eqref{eq: basis image algebra}, $\{O_i\} \in \imum$, we need the following lemma:
\begin{lemma}\label{lemma: adjoint action}
    Let $G$ and $H$ be matrix Lie groups with Lie algebras $\mathfrak{g}$ and $\mathfrak{h}$. Let
    $\varphi:G\rightarrow H$ be a Lie group homomorphism and $d\varphi:\mathfrak{g}\rightarrow \mathfrak{h}$ its differential at the identity (a Lie algebra homomorphism). Then, for every $g\in G$ and $X\in \mathfrak{g}$,
    \begin{equation}
        d\varphi(g^{-1} X g) = \varphi(g)^{-1} d\varphi(X) \varphi(g)  \:.
    \end{equation}
\end{lemma}
\begin{proof}
    This can be proved by considering a curve passing through the identity at $t=0$ with velocity $g^{-1} X g$, namely $\gamma(t) = e^{tg^{-1}Xg} = g^{-1} e^{tX} g$. By definition of $d\varphi$, taking the derivative of the image curve $\varphi(\gamma(t))$ at $t=0$ yields:
    \begin{equation}
        \begin{aligned}
        d\varphi(g^{-1} X g)
        &=\left. \frac{d}{dt} \varphi(e^{tg^{-1} Xg}) \right|_{t=0} \\[1mm]
        &=\left. \frac{d}{dt} \varphi(g^{-1} e^{tX} g) \right|_{t=0} \\[1mm]
        &= \left. \frac{d}{dt} \varphi(g)^{-1} \varphi(e^{tX}) \varphi(g) \right|_{t=0} \\[1mm]
        &= \varphi(g)^{-1} d\varphi(X) \varphi(g) \:,
        \end{aligned}
    \end{equation}
where in the third equality we used that $\varphi$ is a group homomorphism and in the last equality we used the definition of $d\varphi$ as the differential of $\varphi$ at the identity.    
\end{proof}

Applying Lemma \ref{lemma: adjoint action} to the photonic homomorphism \eqref{eq: group homomorphism} and the adjoint action \eqref{eq: adjoint action}, we conclude that a multiphoton linear optical unitary acting on the basis of multiphoton Hamiltonians \eqref{eq: basis image algebra} is completely determined by a single-photon unitary (scattering matrix) acting on the single-photon Hamiltonians \eqref{eq: basis algebra}:
\begin{equation}\label{eq: adjoint action image}
    U^\dag O_i U = \varphi(S^\dag)\, d\varphi(b_i)\, \varphi(S) = d\varphi\left(S^\dag b_i S \right) = d\varphi\left(\sum_{j=1}^{m^2}C_{ij}b_j\right) = \sum_{j=1}^{m^2}C_{ij}d\varphi(b_j) = \sum_{j=1}^{m^2}C_{ij} O_j \:,
\end{equation}
where $C$ is the same orthogonal matrix from Eq. \eqref{eq: adjoint action}. Similarly, $U O_i U^\dag = \sum_{j=1}^{m^2}C_{ij}^T O_j$. In fact, Eq. \eqref{eq: adjoint action image} (formally) holds even if the $O_i$ are infinite dimensional (block diagonal) operators acting on the full Fock space, $\bigoplus_{n=0}^\infty \mathcal{H}_{n,m}$. Equation \eqref{eq: adjoint action image}  will be very important later on when deriving our invariants, so we collect it in a lemma:
\begin{lemma}\label{lemma}
If $\{b_i\}$ is an orthonormal basis of $\mathfrak{u}(m)$, then letting $\{O_i = \dphi(b_i)\}$ be the associated basis of $\imum \subset \uM$, we have, for any linear optical unitary $U$: ${\rm Ad}_{U^\dag}(O_i)=\sum_{j=1}^{m^2}C_{ij} O_j$, where $C$ is an orthogonal matrix.
\end{lemma}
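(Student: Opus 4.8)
The plan is to establish two things: first, that the adjoint action in the image algebra is governed by the same matrix $C$ as in $\mathfrak{u}(m)$ (already done for me in equation \eqref{eq: adjoint action image}), and second, that the matrix $C$ arising from the adjoint action on an orthonormal basis must be orthogonal. Since the first point is granted by the excerpt, the entire content of the proof reduces to showing $CC^T = I$ using the orthonormality of $\{b_i\}$ together with the fact that $\mathrm{Ad}_{S^\dagger}$ preserves the relevant trace inner product. I would carry this out directly at the level of $\mathfrak{u}(m)$, since by \eqref{eq: adjoint action image} the matrix $C$ is identical in both settings.

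First I would fix the inner product on $\mathfrak{u}(m)$ to be $\langle X, Y\rangle = \tr{X^\dagger Y}$ and note that the basis \eqref{eq: basis algebra} is orthonormal with respect to it; this is a routine check using that the off-diagonal generators $b^x_{jk}, b^y_{jk}$ have the prefactor $1/\sqrt{2}$ precisely so that $\tr{b_i^\dagger b_j} = \delta_{ij}$. Next, the key observation is that conjugation by the unitary $S$ is trace-preserving and preserves the inner product: for any $X, Y$,
\begin{equation}
    \tr{(S^\dagger X S)^\dagger (S^\dagger Y S)} = \tr{S^\dagger X^\dagger Y S} = \tr{X^\dagger Y},
\end{equation}
using cyclicity of the trace and $S S^\dagger = I$. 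In other words, $\mathrm{Ad}_{S^\dagger}$ is an isometry of $\mathfrak{u}(m)$ equipped with this inner product.

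From here the orthogonality of $C$ is immediate. Applying the isometry property to the basis elements and expanding $\mathrm{Ad}_{S^\dagger}(b_i) = S^\dagger b_i S = \sum_j C_{ij} b_j$ from \eqref{eq: adjoint action}, I would write
\begin{equation}
    \delta_{ik} = \tr{b_i^\dagger b_k} = \tr{(S^\dagger b_i S)^\dagger (S^\dagger b_k S)} = \sum_{j,l} C_{ij}^\ast C_{kl}\, \tr{b_j^\dagger b_l} = \sum_{j} C_{ij}^\ast C_{kj}.
\end{equation}
Since the $b_i$ are Hermitian and the structure constants of conjugation by a unitary are real, the $C_{ij}$ are real, so this reads $\sum_j C_{ij} C_{kj} = \delta_{ik}$, i.e. $CC^T = I$. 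The one point requiring a remark is the reality of $C$: because each $b_i$ is Hermitian and $S^\dagger b_i S$ is again Hermitian, it lies in the real span of the Hermitian basis $\{b_j\}$, forcing $C_{ij}\in\mathbb{R}$.

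I do not expect a serious obstacle here, as the argument is essentially the standard fact that the adjoint representation of a compact group on its Lie algebra is orthogonal with respect to an invariant (here, the trace) inner product. The only mild subtlety worth stating explicitly is the reality of the entries of $C$; without it one would only conclude that $C$ is unitary rather than orthogonal. Combining $CC^T = I$ with equation \eqref{eq: adjoint action image}, which transfers the very same $C$ to the action on $\{O_i\}$, completes the proof of the lemma.
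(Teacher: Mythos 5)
Your proof is correct and follows essentially the same route as the paper: the paper's own argument is precisely the invariance of the trace form under conjugation, $\tr{b_i b_j}=\tr{{\rm Ad}_{S^\dag}(b_i)\,{\rm Ad}_{S^\dag}(b_j)}$, combined with orthonormality of $\{b_i\}$ and the transfer of $C$ to the $\{O_i\}$ via Eq.~\eqref{eq: adjoint action image}. Your explicit remark on the reality of $C$ (needed to get orthogonality rather than mere unitarity) is a detail the paper leaves implicit, but it is a welcome clarification rather than a different method.
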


\subsection{Irreducible representations}\label{sec: irreducible representations}

In the previous section, we defined the adjoint action of $\Um$ on $\imum \subset \uM$ via the group homomorphism $\varphi$. In fact, this action can also be defined on the entire $\uM$:
\begin{equation}\label{eq: linear optical ad rep}
    \mathrm{Ad}_{\varphi(S)} (\rho) = \varphi(S) \,\rho \, \varphi(S)^\dagger \:, \qquad S \in \Um\:, \:\:\: \rho \in \uM \:.
\end{equation}
This is very convenient because the density matrix of a quantum state with $n$ photons and $m$ modes lives in $\uM$. This action defines a representation of $\Um$ on $\uM$ that, if vectorized, can be seen to be isomorphic to the tensor product of the representations $\varphi$, from Eq. \eqref{eq: group homomorphism}, and $\bar\varphi$, its complex conjugate representation:
\begin{equation}
    \mathrm{Ad}_{\varphi(S)} \cong \varphi (S) \otimes \bar \varphi (S) \:.
\end{equation}

Contrary to the irreducibility of the representation $\varphi$ on $\mathcal{H}_{n,m}$ (see Section \ref{sec: lie groups}), the representation $\mathrm{Ad}_{\varphi(S)}$ is not irreducible\footnote{For the basic concepts of representation theory, refer to Appendix \ref{appendix: invariant subspaces}.}. It was shown in \cite{arienzo_bosonic_2025, wilkens_benchmarking_2024} that this representation can be decomposed into a trivial representation and $n$ non-trivial inequivalent irreducible representations. The dimensions of the carrier spaces of the representation, $W_\ell$, are
\begin{equation}
    \mathrm{dim} \,W_{\ell} = \frac{m + 2 \ell - 1}{m - 1} \binom{m + \ell - 2}{\ell}^{2} \:, \qquad \ell = 0, 1, \ldots , n \:.
\end{equation}
The subspace $W_0$ corresponds to the trivial subspace generated by the identity, $\mathcal{I}$. The subspace $W_1$, with dimension $m^2-1$, is $\imsum$\footnote{Since the dimension of each $W_\ell$ is different, there are no isomorphic irreducible subspaces and each $W_\ell$ coincides with its isotypic component. By the uniqueness of the isotypic (or canonical) decomposition \cite[Sections 2.6, 4.3]{serre_representations}, we conclude that $W_1 = \imsum$, because both spaces are irreducible and share the same dimension.}, the image of $\mathfrak{su}(m)$ under $d\varphi$ (recall that $\mathfrak{su}(m)$ is the subalgebra of $\um$ consisting of traceless matrices).

In this work, we call $\imum = \mathcal{I} \oplus \imsum$ the \textit{tangent subspace} and its complement, $\imum^\perp$, the \textit{perpendicular subspace}, which could be further decomposed into invariant subspaces: $\imum^\perp = \bigoplus_{\ell=2}^{n+1} W_\ell$. The decomposition $\uM = \imum \oplus \imum^\perp$ will be useful in the following Section because it allows us to decompose a density matrix $\rho\in\uM$ into the tangent component of $\uM$ and everything else. Of course, we could refine this partition even further and decompose $\rho$ into $n+1$ components belonging to each $W_\ell$.

\section{Invariants}\label{sec: invariants}

\subsection{Tangent invariant}\label{sec: tangent invariant}

The density matrix $\rho$ of a state with $n$ total photons in $m$ modes is a Hermitian matrix of size $M \times M$, so it belongs to $\uM$. We consider mapping $\rho$ onto the subalgebra $\imum$ of multiphoton Hamiltonians:
\begin{equation}\label{eq: tangent projection}
    \rho_T = \sum_{i=1}^{m^2} \tr{O_i \rho} O_i \:,
\end{equation}
where $\{O_i\}$ is the basis \eqref{eq: basis image algebra} of $\imum$. We will call $\rho \mapsto \rho_T$ the \textit{tangent map}. Note that this map was also studied in \cite{somma_quantum_2005} in the context of entanglement and the norm of $\rho_T$ was shown to be invariant when $\{O_i\}$ is an orthonormal basis.

An interesting property of $\rho_T$ is that the evolution of $\rho$ under a linear optical $U$ commutes with the tangent map:
\begin{equation}\label{eq: evolution of the projection}
    \begin{aligned}
    &(U\rho U^\dagger)_T = \sum_{i=1}^{m^2}  {\rm tr}(O_i U\rho U^\dagger ) O_i = \sum_{i=1}^{m^2}  {\rm tr}(U^\dagger O_i U\rho) O_i \\
    &=\sum_{ij}C_{ij} {\rm tr}(O_j \rho ) O_i = \sum_{j}{\rm tr}(O_j \rho ) U O_j U^\dagger = U\rho_T U^\dagger \:.  \end{aligned}
\end{equation}
This property, known as equivariance, leads to our first invariant: the spectrum of $\rho_T$ is conserved. An important aspect of Eq. \eqref{eq: evolution of the projection} is that we only used the properties of the adjoint action. We didn't need to orthonormalize the basis $\{O_i\}$, so the invariant works even if the $O_i$ are infinite dimensional.

As illustrated in Figure \ref{fig:inclusion_spec}, the conservation of $\spec{\rho_T}$ gives a necessary condition for the existence of a linear optical $U$ connecting $\rho$ with another state in Fock space.

\begin{figure}[ht]
\centering
\includegraphics[scale=0.8]{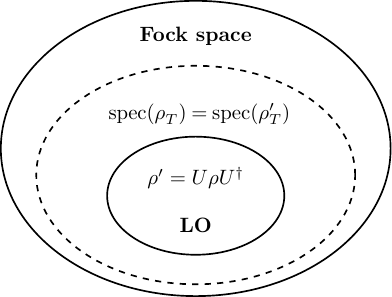}
\caption{For a given state $\rho$, there is a subset of states in Fock space that we can reach with linear optics (LO) and it is contained in the set of states sharing the same invariant.} \label{fig:inclusion_spec}
\end{figure}

Since $M$ grows very quickly with the number of photons, we have to diagonalize big matrices with a lot of redundant information to obtain the spectrum. To simplify this calculation, note that $\rho_T$ belongs to the algebra $\imum$, which is isomorphic to $\um$. Therefore,  we can take the inverse algebra homomorphism to map $\rho_T$ into a smaller matrix in $\um$:
\begin{equation}
    h_\rho := \dphiinv(\rho_T)=\sum_{i=1}^{m^2} \tr{O_i \rho} b_i \:,
\end{equation}
where $\{b_i\}$ is the orthonormal basis of $\um$ given in Eq. \eqref{eq: basis algebra}. We see that $h_\rho$ is also equivariant under the adjoint action of $S$:
\begin{equation} \label{eq: h_rho equivariance}
\begin{aligned}
    & h_{U\rho U^\dagger} = \sum_{i=1}^{m^2} \tr{\udag O_i U \rho} b_i = \sum_{ij}^{m^2}C_{ij} {\rm tr}(O_j \rho ) b_i = \\
    & = \sum_{ij}^{m^2}{\rm tr}(O_j \rho )C^T_{ji} b_i = \sum_{i=1}^{m^2} \tr{ O_i \rho} S b_i S^\dagger =  S h_\rho S^\dagger \:.
\end{aligned}
\end{equation}
Therefore, the spectrum of $h_\rho$ is invariant under linear optical evolution. (Recall that $S$ is the classical scattering matrix that describes the evolution of a single photon and it is related to $U$ via the photonic homomorphism $\varphi(S)=U$.)

We can summarize the previous paragraphs in the following theorem.

\begin{theorem}\label{thm: spectrum projection}
The spectrum of $\rho_T = \sum_{i=1}^{m^2}  {\rm tr}(O_i \rho) O_i$ is invariant under linear optical evolution. Moreover, this spectrum has the same information as the spectrum of $h_\rho = \sum_{i=1}^{m^2}  {\rm tr}(O_i \rho) b_i$.
\end{theorem}
\begin{proof}
    To prove that both invariants are equivalent, note that we can diagonalize $h_\rho$ with unitary matrices:
    $$h_\rho = V \Lambda V^\dagger \:.$$
    Then, applying the algebra isomorphism $d\varphi$ \eqref{eq: algebra homomorphism} and Lemma \ref{lemma: adjoint action}: 
    $$\rho_T = d\varphi(h_\rho) =
    d\varphi(V \Lambda V^\dagger) =
    \varphi(V)d\varphi(\Lambda)\varphi(V^\dagger) =
    \varphi(V)d\varphi(\Lambda)\varphi(V)^\dagger = 
    \varphi(V)D\varphi(V)^\dagger \:,
    $$
    where $\varphi(V)$ is unitary and $D = d\varphi(\Lambda)$ is diagonal, as one can check from the definition of $d\varphi$. Therefore, the spectrum of $h_\rho$ and the spectrum of $\rho_T$ are related by an isomorphism and must provide the same information. Namely, $\spec{\rho_T}=\spec{\rho_T'}$ if and only if $\spec{h_\rho}=\spec{h_{\rho'}}$.
    
\end{proof}

Let us analyze $h_\rho$ closely. The $j$-th diagonal element, corresponding to $b_j^z = \ket{j}\bra{j}$, is the mean number of photons in the $j$-th mode: $\tr{O_j^z \rho}=\expect{n_j}$. Meanwhile, the element $(j,k)$ is just $\expect{a_k^\dagger a_j}$. Thus, for two modes, the invariant is the spectrum of
\begin{equation}
    h_\rho = 
    \begin{pmatrix}
        \expect{n_1}  &  \expect{a_2^\dagger a_1}  \\
        \expect{a_1^\dagger a_2}  &  \expect{n_2}
    \end{pmatrix} \:.
\end{equation}
This invariant is identical to the invariant $ff^\dagger |_{k=1}$ in \cite{MRO14} (but with $\expect{n_j}$ instead of $\expect{a_j a_j^\dagger}$). In fact, $h_\rho$ is nothing but the quantum optical coherency matrix, which is also shown in \cite{rodari_observation_2025} to be equivalent to the spectrum of $\rho_T$.

Analytically computing the spectrum of $h_\rho$ can get complicated as the number of modes grows, so an alternative characterization would help. In Corollary \ref{corollary: trace invariants}, we follow the proof in \cite{Klein19631283} to show that the powers of $h_\rho$ are invariant under the adjoint action. This yields $m$ independent invariants, the maximum that can be built from the expectation values $\tr{O_i\rho}$, which follows from \cite{Klein19631283} noting that $\{\tr{O_i\rho}\}$ are the coordinates of $h_\rho$ in the basis $\{b_i\}$.

\begin{corollary}\label{corollary: trace invariants}
The following quantities are invariant under linear optical evolution of a quantum state $\rho$
\begin{equation}\label{eq: trace invariants}
    \left\lbrace
     \begin{aligned}
     I_1(\rho) &= {\rm tr}(n\rho) \:,\\
     I_2(\rho) &= \sum_{i=1}^{m^2}{\rm tr}(O_i\rho)^2\:,\\
     \vdots\\
     I_m(\rho) &= \sum_{i_1,\cdots,i_m} \tr{ b_{i_1} \cdots b_{i_m}}{\rm tr}( O_{i_1}\rho)\cdots{\rm tr}( O_{i_m}\rho)\:.
     \end{aligned}\right.
\end{equation}
Moreover, they are equivalent to the invariants of Theorem \ref{thm: spectrum projection}.
\end{corollary}
\begin{proof}
Each invariant just corresponds to taking traces of powers of $h_\rho$. Recalling Eq. \eqref{eq: h_rho equivariance}, we obtain:
\begin{equation}
    \begin{aligned}
    \tr{(h_{U \rho \udag})^k} = \tr{(S h_\rho S^\dag)^k} = \tr{h_\rho^k}=\tr{\sum_{i_1} \tr{O_{i_1}\rho}b_{i_1} \cdots \sum_{i_k}\tr{O_{i_k}\rho}b_{i_k}} \\
    = \sum_{i_1, \ldots, i_k}\tr{ b_{i_1} \cdots b_{i_k}}\tr{O_{i_1}\rho} \cdots \tr{O_{i_k}\rho} \:.
    \end{aligned}
\end{equation}
Since $\{b_i\}$ is an orthonormal basis, $\tr{b_i b_j} = \delta_{ij}$ and the invariant for $k=2$ turns out to be $\sum_i \tr{O_i \rho}^2$. For $k=1$, the invariant is simply the mean number of photons.

These $m$ invariants provide the same information as the spectrum. On the one hand, the spectrum of $h_\rho$ completely determines the traces of $h_\rho^k$. On the other hand, the characteristic polynomial of $h_\rho$ can be written, using the Leverrier-Faddeev algorithm \cite{Leverrier_Faddeev}, in terms of $\tr{h_\rho^k}$ for $k=1,\ldots m$.
\end{proof}

Note that the quantity $\tr{ b_{i_1} \cdots b_{i_k}}$ is a scalar that is invariant under the simultaneous transformation of the basis elements $\{b_i\}$ under the adjoint action:
\begin{equation}\label{eq: invariant tensor}
    \tr{ b_{i_1} \cdots b_{i_k}} =  \tr{ S^\dag b_{i_1} S \cdots S^\dag b_{i_k} S} = \sum_{j_1 \ldots j_k} C_{i_1 j_1} \cdots  C_{i_k j_k} \tr{b_{j_1} \cdots b_{j_k}} \:.
\end{equation}
In addition, this quantity also appears in the $m$ Casimir operators of the adjoint representation of $\Um$ on $\imum$ \cite{perelomov_casimir_1968}:
\begin{equation}
    \sum_{i_1 \ldots i_k} \tr{b_{i_1} \cdots b_{i_k}} O_{i_1} \cdots O_{i_k} \:, \qquad \mathrm{for} \:\:\: k = 1, 2,  \ldots m \:.
\end{equation}

\subsection{Higher order invariants}\label{sec: higher order invariants}

Recall that density matrices live in $\uM = \imum \oplus \imum^\perp$. To access information in the perpendicular component, we need to map $\rho$ outside the linear optical algebra. One option would be to compute the decomposition of $\uM=\bigoplus_{\ell=0}^n W_\ell$ into subspaces that are irreducible for $\mathrm{Ad}_{\varphi(S)}$ (see Section \ref{sec: irreducible representations}), but this requires computing the Clebsch-Gordan coefficients, which is computationally expensive \cite{wilkens_benchmarking_2024, arienzo_bosonic_2025}. With this decomposition, the projection of a density matrix into each subspace $W_\ell$ would have an invariant spectrum.

An alternative, inspired by Corollary \ref{corollary: trace invariants}, is to build invariant operators with certain products of Hamiltonians $O_i$ (related to higher moments \cite{rodari_observation_2025}). The number of different  operators $O_i$ in the product gives the order of the invariant. These products are related with anticommutators that are, in general, outside of the linear optical algebra, leading to new invariants.

\begin{prop}\label{prop: higher spectral invariant image}
For each $k\geq 1$, the spectrum of the Hermitian matrix
\begin{equation}\label{eq: higher spectral invariant image}
    P_k(\rho) := \sum_{i_1, \ldots, i_k}  {\rm tr}(O_{i_1} \cdots O_{i_k}  \rho) O_{i_1} \cdots O_{i_k}
\end{equation}
is invariant under linear optical evolution.
\end{prop}
\begin{proof}
    First, we prove that the matrix is Hermitian. Each term in \eqref{eq: higher spectral invariant image}, ${\rm tr}(O_{i_1} \cdots O_{i_k}  \rho) O_{i_1} \cdots O_{i_k}$, is summed with its Hermitian conjugate
    \begin{align*}
    \big( {\rm tr}(O_{i_1} \cdots O_{i_k}  \rho) O_{i_1} \cdots O_{i_k}\big)^\dagger 
    &= \overline{{\rm tr}(O_{i_1} \cdots O_{i_k}  \rho)} O_{i_k}^\dagger \cdots O_{i_1}^\dagger \\
    &= {\rm tr}(\rho^\dagger O_{i_k}^\dagger \cdots O_{i_1}^\dagger) O_{i_k} \cdots O_{i_1} \\
    &={\rm tr}(O_{i_k} \cdots O_{i_1} \rho) O_{i_k} \cdots O_{i_1}\:.
\end{align*}
Since the sum of a matrix and its Hermitian conjugate gives a Hermitian matrix, Eq. \eqref{eq: higher spectral invariant image} must be Hermitian.

To prove that the spectrum is invariant under linear optical evolution of $\rho$, apply the adjoint action \eqref{eq: adjoint action image} on each $O_i$:
\begin{align*}
    P_k(U\rho U^\dagger) &= \sum_{i_1, \ldots, i_k}  {\rm tr}(O_{i_1} \cdots O_{i_k}  U\rho U^\dagger) O_{i_1} \cdots O_{i_k} \\
    &= \sum_{i_1, \ldots, i_k}  {\rm tr}(U^\dagger O_{i_1} U \cdots U^\dagger O_{i_k} U \rho) O_{i_1} \cdots O_{i_k}  \\
    &=\sum_{i_1, \ldots, i_k} \sum_{j_1, \ldots, j_k}  C_{i_1j_1}\cdots C_{i_kj_k}{\rm tr}(O_{j_1}\cdots O_{j_k} \rho) O_{i_1} \cdots O_{i_k} \\
    &=\sum_{i_1, \ldots, i_k} {\rm tr}(O_{j_1}\cdots O_{j_k} \rho) U O_{i_1} U^\dagger \cdots U O_{i_k} U^\dagger \\
    &= U \left(\sum_{i_1, \ldots, i_k} {\rm tr}(O_{j_1}\cdots O_{j_k} \rho)  O_{i_1} \cdots O_{i_k} \right) U^\dagger = U P_k(\rho) U^\dagger \:.
\end{align*}
Therefore, the spectrum of $P_k(\rho)$ is invariant.
\end{proof}

Additionally, we show in Appendix \ref{appendix: invariant subspaces} how the eigenspaces of the operator $P_k$ relate to the invariant subspaces $W_\ell$ from Section \ref{sec: irreducible representations}.

In numerical experiments \cite{notebook}, we found that this invariant for $k=2$ is different for $\ket{N,N}$ and $(\ket{2N,0} + \ket{0, 2N})/\sqrt{2}$, allowing us to prove the impossibility of evolving one into the other (for a different proof using the covariance invariant, we refer to Section \ref{sec: example covariance}). On the contrary, the invariants from Theorem \ref{thm: spectrum projection} are the same for these two states and it is impossible to discard this state preparation with them. Thus, by mapping a state onto the perpendicular component of the algebra we gain access to additional information about linear optical evolution.

Another invariant involving higher moments is the spectrum of
\begin{equation}\label{eq: higher spectral invariant}
    \sum_{i_1, \ldots, i_k}  {\rm tr}(O_{i_1} \cdots O_{i_k}  \rho) b_{i_1} \cdots b_{i_k} \:, \qquad k\geq 2, \ldots, m \:.
\end{equation}
Following the proof in Corollary \ref{corollary: trace invariants}, the invariance of this spectrum can be used to recover a wide range of invariants that arise from the invariant tensor \eqref{eq: invariant tensor}. For example, by taking the trace of \eqref{eq: higher spectral invariant}, we obtain invariant quatities analogous to those in Corollary \ref{corollary: trace invariants}:
\begin{equation}\label{eq: higher order traces}
    \sum_{i_1,\ldots,i_k} {\rm tr}(b_{i_1}\cdots b_{i_k}) {\rm tr}( O_{i_1}\cdots{O}_{i_{k}} \rho)\:, \qquad k\geq 2, \ldots, m \:.
\end{equation}
Apparently, these invariants should have access to more information of the state than the tangent invariant (Theorem \ref{thm: spectrum projection} and Corollary \ref{corollary: trace invariants}) since they involve higher moments, which are not easy to measure experimentally. However, numerical experiments revealed that these invariants fail to detect the impossibility of preparing $(\ket{2N,0} + \ket{0, 2N})/\sqrt{2}$ from $\ket{N,N}$. We leave for future work understanding whether these invariants are able to detect more impossible linear optical evolutions than the invariants from Corollary \ref{corollary: trace invariants}.

Following the rule of contracting an invariant tensor with quantities that change under the adjoint action when $\rho$ is evolved with linear optics, one could construct new invariants in a similar way. Moreover, we could also construct other operators with an invariant spectrum under linear optical evolution of $\rho$, such as:
\begin{equation}
    \sum_{i_1\cdots i_k} {\rm tr}\left(b_{i_1}\cdots b_{i_k}\right)  [ O_{i_k},[\cdots[ O_{i_1},\rho]\cdots] \:.
\end{equation}

\subsection{Covariance invariant}

Another interesting second order invariant from a physical point of view arises from the covariance matrix, which also uses information about second order correlations:
\begin{theorem}
The spectrum of the covariance matrix
\begin{equation}\label{eq: covariance invariant}
    M(\rho)_{ij} = \langle O_i\rangle_\rho\langle O_j\rangle_\rho-\left\langle \frac{O_i O_j + O_j O_i}{2}\right\rangle_\rho
\end{equation}
is invariant under linear optical evolution.
\end{theorem}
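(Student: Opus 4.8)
The plan is to show that under any linear optical evolution the covariance matrix transforms by an orthogonal similarity, $M(U\rho U^\dagger) = C M(\rho) C^T$, so that its eigenvalues are left unchanged. The only ingredient I need is Lemma \ref{lemma}: for any linear optical $U$ one has $U^\dagger O_i U = \sum_j C_{ij} O_j$ with $C$ orthogonal ($CC^T = I$).

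First I would record that $M(\rho)$ is real symmetric, and therefore has real spectrum on which the statement makes sense. Each $O_i$ is Hermitian, so every $\langle O_i\rangle_\rho = \tr{O_i\rho}$ is real; the symmetrised product $\tfrac{1}{2}(O_iO_j+O_jO_i)$ is again Hermitian, so $\langle\tfrac{1}{2}(O_iO_j+O_jO_i)\rangle_\rho$ is real as well, and both terms of $M(\rho)_{ij}$ are manifestly symmetric under $i\leftrightarrow j$.

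Next I would transform each piece using cyclicity of the trace together with Lemma \ref{lemma}. For the linear factors,
$$\langle O_i\rangle_{U\rho U^\dagger} = \tr{U^\dagger O_i U\,\rho} = \sum_k C_{ik}\langle O_k\rangle_\rho,$$
so the product term becomes $\sum_{kl} C_{ik}C_{jl}\langle O_k\rangle_\rho\langle O_l\rangle_\rho$. For the quadratic term I would insert $UU^\dagger = I$ between the two operators, obtaining $U^\dagger O_iO_j U = (U^\dagger O_i U)(U^\dagger O_j U) = \sum_{kl} C_{ik}C_{jl}\,O_kO_l$, and after symmetrising $U^\dagger \tfrac12(O_iO_j+O_jO_i) U = \sum_{kl} C_{ik}C_{jl}\,\tfrac12(O_kO_l+O_lO_k)$. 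Subtracting the two transformed contributions yields $M(U\rho U^\dagger)_{ij} = \sum_{kl} C_{ik}C_{jl}\,M(\rho)_{kl}$, that is, $M(U\rho U^\dagger) = C M(\rho) C^T$.

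Finally, since $C$ is orthogonal we have $C^T = C^{-1}$, so $M(\rho)\mapsto C M(\rho) C^T$ is a genuine similarity transformation and leaves the spectrum invariant. The computation is essentially routine once Lemma \ref{lemma} is available; the one point needing a moment's care is the quadratic term, where the identity $UU^\dagger=I$ must be inserted between the two factors so that each $O$ transforms on its own and the two summation indices $k,l$ factor through the \emph{same} orthogonal matrix $C$ as the linear term. This matching of index structure is exactly what guarantees that both terms conjugate by the same $C$, and hence so does their difference.
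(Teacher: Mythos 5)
Your proof is correct and follows essentially the same route as the paper's: use Lemma \ref{lemma} to show each entry of the covariance matrix transforms by the orthogonal matrix $C$ on both indices, so that $M(U\rho U^\dagger)$ is an orthogonal similarity transform of $M(\rho)$ and the spectrum is unchanged. The only cosmetic difference is whether one writes $C M(\rho) C^T$ or $C^T M(\rho) C$ (a transposition convention), which is immaterial since either way $C$ is orthogonal and the spectrum is preserved; your added care about inserting $UU^\dagger$ in the quadratic term and checking symmetry of $M(\rho)$ is a nice, if routine, elaboration of what the paper leaves implicit.
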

\begin{proof}
To prove it, note that, using Eq. \eqref{eq: adjoint action image},
\begin{equation}
    {\rm tr }\left(O_i O_j U \rho U^\dagger\right)= 
    {\rm tr }\left(U^\dagger O_i U U^\dagger O_j U \rho\right)=
    \sum_{i'j'}^{m^2}C_{ii'}{\rm tr }\left( O_{i'} O_{j'} \rho\right) C^T_{j'j}
\end{equation}
and similarly for the other terms. Since $C$ is a rotation matrix, this implies that the spectrum of the covariance is, indeed, invariant:
\begin{equation}
    {\rm spec }\left(M(U \rho U^\dagger)\right) =
    {\rm spec }\left(C M(\rho)C^T\right) =
    {\rm spec }\left(M(\rho)\right) \:.
\end{equation}
\end{proof}

The proof shows that evolving a state under linear optics amounts to rotating the basis $\{O_i\}$ with a matrix $C$. Since the eigenvalues of the covariance matrix are the variances along the principal components, linear optics just rotates these principal components without changing their eigenvalues.

The terms of the covariance matrix are the expectation values of Hermitian operators, so they could, in theory, be measured experimentally. However, the implementation of successive measurements of $O_i$ and $O_j$ seems to require non-destructive measurements, which are challenging with current technology \cite{grangier_quantum_1998,nemoto_universal_2005}.

\section{Examples}\label{sec: examples}

In \cite{PGM23}, we used the orthonormalized version of the invariant $I_2(\rho)=\sum_{i=1}^{m^2}{\rm tr}(O_i\rho)^2$ to prove the impossibility of preparing a Bell state from a Fock state and the impossibility of preparing a $\ket{{\rm GHZ}}$ state from a $\ket{{\rm W}}$ state \cite{DVC00} (even if we add Fock ancillas in both cases). Now, we shall show applications of the new invariants.

\subsection{Tangent invariant of a Fock state}

As a first example, we prove that it is impossible to evolve one Fock state, $\ket{k_1, \ldots k_m}$, into another Fock state, $\ket{k'_1, \ldots k'_m}$, unless they are a permutation of one another. It is an alternative but similar proof to \cite{MRO14} using Theorem \ref{thm: spectrum projection}. If $\rho = \ket{k_1, \ldots k_m}\bra{k_1, \ldots k_m}$ is our density matrix, the spectrum of
\begin{equation}
h_\rho = \sum_{i=1}^{m^2}  {\rm tr}(O_i \rho) b_i = \sum_{j=1}^m  {\rm tr}(n_j \rho) b_j^z =
\begin{pmatrix}
\expect{n_1} & & \\
& \ddots &  \\
& & \expect{n_m}
\end{pmatrix}
=
\begin{pmatrix}
k_1 & & \\
& \ddots &  \\
& & k_m
\end{pmatrix}
\end{equation}
is invariant, where $b_j^z$ is given by Eq. \eqref{eq: basis algebra}. Therefore, the number of photons in each mode must be the same, except for a possible reordering of the modes.

\subsection{Tangent invariant of two coherent states}

One advantage of obtaining invariants with the adjoint action is that many invariants work regardless of whether we restrict the operators $O_i$ to subspaces of a fixed number of photons or not (because the matrix $C$ of the adjoint action in Eq. \eqref{eq: adjoint action image} is the same).

Let us put this in practice by applying, for example, the invariants in Corollary \ref{corollary: trace invariants} to two coherent states with amplitudes $\alpha$ and $\beta$, say $\rho = \ket{\alpha}\ket{\beta} \bra{\alpha}\bra{\beta}$. Since there are only two modes, Corollary \ref{corollary: trace invariants} gives two invariants:
\begin{align}
     & I_1(\rho) = \expect{{n}_1} + \expect{{n}_2} = |\alpha|^2 + |\beta|^2 \:, \\
     & I_2(\rho) = \expect{{n}_1}^2 + \expect{{n}_2}^2 + \left\langle{\frac{a^\dag_{1} a_2+ a^\dag_{2} a_1}{\sqrt{2}}}\right\rangle^2  
    + \left\langle{i\frac{a^\dag_{1} a_2- a^\dag_{2} a_1}{\sqrt{2}}}\right\rangle^2 = (|\alpha|^2 + |\beta|^2)^2 \:.
\end{align}
In this case, it happens that both invariants give the same information, the conservation of energy, but it gives a simple example of invariants applied to states in Fock space.

\subsection{Covariance invariant}\label{sec: example covariance}

The covariance invariant \eqref{eq: covariance invariant} can also be used to give an alternative proof to \cite{vanmeter_general_2007} and show that it is impossible to go from a Fock state $\ket{k, k'}$ to a NOON state $(\ket{N0}+\ket{0N})/\sqrt{2}$, where $k+k'=N>2$. The covariance matrices for $N>2$ are:
\begin{equation}
M(\ket{{\rm NOON}}) = -\frac{1}{4} \left(
    \begin{array}{cccc}
 N^2 & -N^2 & 0 & 0 \\
 -N^2 & N^2 & 0 & 0 \\
 0 & 0 & 2N & 0 \\
 0 & 0 & 0 & 2N \\
 \end{array}
 \right),
\end{equation}
\vspace{1mm}
\begin{equation}
M(\ket{k, k'})=-(2kk'+k+k')\left(
    \begin{array}{cccc}
 0 & 0 & 0 & 0 \\
 0 & 0 & 0 & 0 \\
 0 & 0 & 1 & 0 \\
 0 & 0 & 0 & 1 \\
 \end{array}
 \right).
\end{equation}
They clearly have different eigenvalues, so a transition from one to the other is impossible with passive linear optics. Note that for $N=2$ the matrices are different and they do have the same spectrum, as expected by the Hong-Ou-Mandel effect.

\subsection{Combining multiple invariants}

We can also combine multiple invariants to find conditions that trickier state preparations must fulfill. Let us take a coherent state plus a Fock state (one in each mode), $\ket{\psi_1}=\ket{\alpha}\ket{k}$, and a photon-added coherent state plus another Fock state, $\ket{\psi_2}={a}_1^\dagger\ket{\beta}\ket{k'}/\sqrt{1+|\beta|^2}$ (recall that photon-added coherent states are obtained by applying the creation operator to a coherent state \cite{zavatta_quantum--classical_2004}). The conservation of the mean number of photons gives a first condition: $|\alpha|^2 + k = \gamma + k'$, where $\gamma = (|\beta|^4 + 3|\beta|^2+1)/(1+|\beta|^2)$ is the mean number of photons of the photon-added coherent state. The conservation of the invariant $I_2(\rho)$ gives a second condition: $|\alpha|^4 + k^2 = \gamma^2 + k'^2$. These two equations imply that $|\alpha|^2 = k'$, $\gamma = k$ (if $k\neq k'$). These conditions are enough to prove the impossibility of preparing $\ket{\psi}_2$ from $\ket{\psi}_1$ deterministically when $k=1$ and $k'=0$.

More conditions arise from the conservation of the spectrum of the covariance matrix \eqref{eq: covariance invariant} of each state:

\begin{align}
M(\ket{\alpha}\ket{k}) = &\left(
    \begin{array}{cccc}
 -|\alpha|^2 & 0 & 0 & 0 \\
 0 & 0 & 0 & 0 \\
 0 & 0 & |\alpha|^2+k+2|\alpha|^2k & 0 \\
 0 & 0 & 0 & |\alpha|^2+k+2|\alpha|^2k \\
 \end{array}
 \right), \\[5mm]
M\left(\frac{{a}_1^\dagger\ket{\beta}\ket{k'}}{\sqrt{1+|\beta|^2}}\right)= \frac{1}{2} &\left(
\begin{array}{cccc}
 -4\frac{|\beta|^2+|\beta|^4+|\beta|^6}{(1+|\beta|^2)^2} & 0 & 0 & 0 \\
 0 & 0 & 0 & 0 \\
 0 & 0 & \gamma + k'\frac{3+7|\beta|^2+2|\beta|^4}{1+|\beta|^2} & 0 \\
 0 & 0 & 0 & \gamma + k'\frac{3+7|\beta|^2+2|\beta|^4}{1+|\beta|^2} \\
 \end{array}
 \right).
\end{align}
To alleviate the pain of computing the expectation values of the second state, we used the Julia package \texttt{QuantumAlgebra.jl} \cite{QuantumAlgebra.jl} to put the operators in normal order.

Three eigenvalues turn out to be equal, so we gain only an extra condition from this invariant: $k'=|\alpha|^2=2(|\beta|^2+|\beta|^4+|\beta|^6)/(1+|\beta|^2)^2$. And we could keep on calculating different invariants to obtain more conditions...

\section{Approximate state preparation}\label{sec: approximate state preparation}

Most photonic state preparations are impossible \cite{moyano-fernandez_linear_2017} because the dimension of Fock space grows combinatorially with the number of photons while the dimension of the linear optical unitaries only grows quadratically with the number of modes. This limits the usefulness of the invariants to discard impossible preparations. Instead, we usually want to prepare states using some heralded scheme with a high probability of preparing the target state. In this section we see how to relate the distance between states with the distance between their invariants. This gives a necessary condition for approximate state preparation with linear optics, including heralded preparations.

\begin{figure}[ht]
\centering
\includegraphics[scale=0.9]{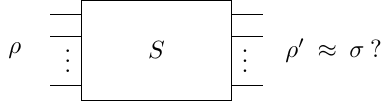}
\caption{Given an input state $\rho$ and a target $\sigma$, can we find a passive interferometer $S$ such that $\rho'=U\rho U^\dagger$ is close to $\sigma$?} \label{fig:approximate state preparation}
\end{figure}

Throughout this section, instead of working with the basis $\{O_i\}$ defined in \eqref{eq: basis image algebra}, we will work with an orthonormalized basis $\{O_i'\}$, so that the map
\begin{equation}
    \rho_T = \sum_{i=1} \tr{O_i' \rho} O_i'    
\end{equation}
is now an orthogonal projection onto the linear optical subalgebra \cite{PGM23}. To avoid excessive notation, we will also denote this projection with $\rho_T$. Working with orthogonal projections will allow us to use the Pythagorean theorem to bound the distance between the invariants.

\subsection{Distances between invariants}\label{distances}

Let $\rho$ be an input state, $\rho'$ the output state of the interferometer and $\sigma$ a target state that we would like to prepare from $\rho$. Let $\{\lambda^\rho_{Ti}\}$, $\{\lambda^{\rho'}_{Ti}\}$ and $\{\lambda^\sigma_{Ti}\}$ be the eigenvalues (in ascending order) of $\rho_T$, $\rho'_T$ and $\sigma_T$, respectively. From Theorem \ref{thm: spectrum projection}, we know that the invariants of the input and output states are equal: $\lambda^\rho_{Ti} = \lambda^{\rho'}_{Ti}$.

To bound the invariants, we use the Schatten norms for matrices $\|A\|_p=\tr{|A|^p}^{1/p}$. For $p=1$, it is the trace norm widely used in quantum information to measure distance between quantum states \cite{Nielsen_Chuang_2010} and, for $p=2$, it is the Frobenius norm $\tr{AA^\dagger}^{1/2}$. The distance between the input and target invariants can be bounded with the Hoffman-Wielandt inequality \cite{bhatia_analysis_1982}:
\begin{equation}
    d_T:=\left(\sum_{i} |\lambda^\rho_{Ti}-\lambda^\sigma_{Ti}|^2\right)^{1/2} = \left(\sum_{i} |\lambda^{\rho'}_{Ti}-\lambda^\sigma_{Ti}|^2\right)^{1/2} \leq \lVert \rho'_T - \sigma_T \rVert_2  \:.
\end{equation}
Now, because the norm $\|\cdot\|_2$ arises from an inner product, the Pythagorean Theorem allows us to decompose $\rho' - \sigma$ into two orthogonal components, one in $\imum$ and another in $\imum^\perp$:
\begin{equation}
    \lVert \rho' - \sigma \rVert_2^2 = \lVert \rho'_T - \sigma_T \rVert_2^2 + \lVert (\rho' - \rho_T) - (\sigma - \sigma_T) \rVert_2^2 \:.
\end{equation}
This equation implies that
$\lVert \rho'_T - \sigma_T \rVert_2 \leq \lVert \rho' - \sigma \rVert_2$. And, because of Schatten norm inequalities, this can be bounded again: $\lVert \rho' - \sigma \rVert_2 \leq \lVert \rho' - \sigma \rVert_1$. Therefore, the distance between the eigenvalues of $\rho_T$ and $\sigma_T$ can be bounded by the distance between $\rho'$ and $\sigma$:
\begin{equation}
\label{SpectralBound}
    d_T  \leq \lVert \rho'_T - \sigma_T \rVert_2 \leq \lVert \rho' - \sigma \rVert_2 \leq \lVert \rho' - \sigma \rVert_1 \:.
\end{equation}
This inequality gives a necessary condition for state preparation: if the eigenvalues of the input and target states are very different, the output and target states cannot be very close.

\subsection{Application to heralded state preparation}
These bounds have an immediate application to heralded state preparation. There are certain states, like Bell states, which can be used as a resource in many protocols, but can only be generated at low rates using parametric down-conversion or other complex processes. Instead, we would like to generate these advanced states from other easier-to-produce input states. As we have seen, most transformations are forbidden. Heralded state generation is a probabilistic way to produce those resource states \cite{forbes_heralded_2025}. Our bounds can be used to bound the probability of success of heralded generation.

Let $\ket{IN}$ be an input state we want to transform into a target state $\ket{T}$ using only linear optics. If this transformation is impossible, we can try heralded state generation, where we provide an evolution 
\begin{equation}
\label{eqH}U\ket{IN}\ket{A}\longrightarrow \ket{OUT}=\sqrt{p}\ket{T}\ket{H}+\sum_i \alpha_i \ket{\psi_i} \ket{H^i_\perp}
\end{equation}
for some ancillary modes in state $\ket{A}$ and success conditioned on measuring the state $\ket{H}$ in the ancillary modes. We have swept any phase of the target state under the $\alpha_i$ rug. In heralded state generation, we can guarantee success with probability $p$ as long as $\langle H \ket{H^i_\perp}=0$ for every $i$ and $\sum_i |\alpha_i|^2=1-p$. The ancillary input state, $\ket{A}$, and the herald, $\ket{H}$, are not necessarily the same state, but, as they are chosen to be easy to prepare or to measure; they will tend to be modes with either one or zero photons.

Now consider the known input state $\ket{IN}\ket{A}$, with density matrix $\rho$, and the desired output state $\ket{T}\ket{H}$, with density matrix $\sigma$. The actual ouput state $\rho'=\ket{OUT}\bra{OUT}$, as described in Eq. \eqref{eqH}, is unknown to us, but, if it has been generated from $\rho$ and linear optics, it must have the same invariant: $\mathrm{spec}(\rho'_T)=\mathrm{spec}(\rho_T)$. Similarly, for any valid output: $\tr{\sigma \rho'}=p$. For the pure states we are considering in heralded schemes
\begin{equation}
\label{BoundH}
\lVert\rho' - \sigma \rVert_2 =\sqrt{\tr{(\rho'-\sigma)^\dag (\rho'-\sigma})}=\sqrt{2}\sqrt{1-p} \:.
    \end{equation}
    
From the Hoffman-Wielandt inequality (Section \ref{distances}), we know that:
\begin{equation}
    d_T:=\left(\sum_{i=1}^\infty |\lambda^\rho_{Ti}-\lambda^\sigma_{Ti}|^2\right)^{1/2} \leq \lVert \rho'_T - \sigma_T \rVert_2  \:.
\end{equation}
Likewise, for the complement,
\begin{equation}
    d_\perp:=\left(\sum_{i=1}^\infty |\lambda^\rho_{\perp i}-\lambda^\sigma_{\perp i}|^2\right)^{1/2} \leq \lVert (\rho'-\rho'_T) - (\sigma-\sigma_T) \rVert_2  \:,
\end{equation}
and we have
\begin{equation}
       d_T^2\leq \lVert \rho'_T - \sigma_T \rVert_2^2 \:, \quad d_\perp^2 \leq \lVert (\rho'-\rho'_T) - (\sigma-\sigma_T) \rVert_2^2  \:.
\end{equation}
We can use the Pythagorean theorem for the Frobenius norm on orthogonal matrices and Eq. \eqref{BoundH} to show:
\begin{equation}
d_T^2+d_\perp^2\leq ||\rho'-\sigma||_2^2=2(1-p) \:,
\end{equation}
which gives a probability bound for the heralded operation
\begin{equation}\label{eq: probability bound}
p\leq 1-\frac{d_T^2+d_\perp^2}{2} \:.
\end{equation}

Finally, note that we could apply the same argument to the projections of $\rho$ and $\sigma$ onto each of the invariant subspaces from Section \ref{sec: irreducible representations}, $\{W_\ell\}_{\ell=0}^n$, using orthonormal bases for $W_\ell$. Computing the distance between the spectra in each $W_\ell$ and applying Pythagoras, we could give tighter bounds to inequality \eqref{eq: probability bound}.

\subsection{Example: preparation of \texorpdfstring{$\ket{20}$}{|20>} from \texorpdfstring{$\ket{11}$}{|11>}}\label{appendix: heralded example}

We can use these results to bound the maximum probability for the forbidden transition $\ket{11}\to \ket{20}$ conditioned on finding no photons in the second mode. This can be seen as a heralded transition from $\ket{1}$ to $\ket{2}$ with one ancillary mode with one photon. Therefore, our input density matrix is $\rho = \ket{11}\bra{11}$ and the target density matrix is $\sigma=\ket{20}\bra{20}$.

The first step in computing the bound \eqref{eq: probability bound} is to precompute an orthonormal basis for the algebra $\imum$, which is done by orthonormalizing the basis \eqref{eq: basis image algebra} with our software package QOptCraft \cite{notebook}. Then, we diagonalize $\rho_T$ and $\rho_\perp$ and compute their spectra: $\spec{\rho_T}=(\frac{1}{3},\frac{1}{3},\frac{1}{3})$ and $\spec{\rho_\perp} = (-\frac{1}{3},-\frac{1}{3},\frac{2}{3})$. Any output state $\rho'$ that we can reach has those spectra in the corresponding subspaces. Next, we compute the projections of $\sigma$ and their spectra: $\spec{\sigma_T}=(-\frac{1}{6},\frac{1}{3},\frac{5}{6})$ and $\spec{\sigma_\perp}=(-\frac{1}{3},\frac{1}{6},\frac{1}{6})$. Finally, we compute
\begin{equation}
d_T^2=\left|\frac{1}{3}-\frac{-1}{6}\right|^2+\left|\frac{1}{3}-\frac{1}{3}\right|^2+\left|\frac{1}{3}-\frac{5}{6}\right|^2=\frac{1}{4}+0+\frac{1}{4}=\frac{1}{2} \:,
\end{equation}
and
\begin{equation}
d_\perp^2=\left|\frac{-1}{3}-\frac{-1}{3}\right|^2+\left|\frac{-1}{3}-\frac{1}{6}\right|^2+\left|\frac{2}{3}-\frac{1}{6}\right|^2=0+\frac{1}{4}+\frac{1}{4}=\frac{1}{2} \:,
\end{equation}
which, substituting in \eqref{eq: probability bound}, bounds the success probability of the transition by $p\leq \frac{1}{2}$. This bound is tight since a balanced beamsplitter produces an output $(\ket{20}-\ket{02})/\sqrt{2}$.

However, this decomposition into two subspaces doesn't give tight bounds for other heralded generation of entangled states, like NOON or Bell states. We leave for future work to refine these bounds with the decomposition $\uM = \bigoplus_{\ell=0}^n W_\ell$ from Section \ref{sec: irreducible representations}.

\section{Software}\label{sec: code}

Even though analytical computations are useful for proving theorems, they are complicated for most states and our invariants are better suited for numerical calculations. We implemented the invariants in our Python library QOptCraft \cite{aguado_qoptcraft_2023}, with an example notebook accompanying this article \cite{notebook}.

In order to make the computations more efficient, the bases of the linear optical algebras are saved as sparse matrices. This is especially important as the number of photons grows. Precomputing and saving the basis is also a good idea as computing it from scratch every time increases the computational time of finding the invariants.

Finally, we note that currently there is only the possibility of using states with a fixed number of photons. Coherent or squeezed states are not implemented yet.

\section{Discussion and outlook}\label{sec: discussion}

Preparing useful quantum states with linear optics remains a fundamental challenge in quantum information. Our work shines a light on the mathematical structure of linear optics, giving a recipe to define invariant quantities that must be conserved by passive linear optical evolution. These invariants are especially useful when trying to prepare, exactly or approximately, some state from another one (presumably, easy to generate). They give necessary conditions for this preparation to be possible. These conditions could be implemented in some of the numerous algorithms for computer-aided state preparation \cite{knott_search_2016, gao_computer-inspired_2020, krenn_automated_2016} to discard impossible preparations.

Our invariants have the advantage that the quantum state is not restricted to a fixed number of photons (i.e. a finite-dimensional space): we can work with the full Fock space. This improves other invariants in the literature that can only be applied to finite-dimensional spaces \cite{somma_quantum_2005, MRO14, PGM23}. A second improvement is that they can be used with mixed states, not only pure ones, like the invariant in \cite{MRO14}. Another good property of some of our invariants (the ones involving expectation values of the operators $O_i$) is that they can be measured experimentally \cite{NK06}. In fact, they were recently measured in an experimental setup by two independent groups \cite{rodari_observation_2025, yang_experimental_2025}. Moreover, in \cite{rodari_observation_2025}, Rodari \textit{et al.} relate the tangent invariant with conserved quantities of the coherency matrix of first-order correlations and, in \cite{yang_experimental_2025}, Yang \textit{et al.} relate it to the block-diagonal structure of the Hermitian transfer matrix. In \cite{rodari_observation_2025}, the authors also propose using the invariants to benchmark boson sampling experiments and to certify non-linear optical evolutions, ideas that we would like to look into in future work. Lastly, Thomas \textit{et al.} showed recently \cite{thomas_shedding_2025} that the tangent invariants can be efficiently estimated with any desired precision from classical shadows: random measurements on a few copies of the quantum state that can be used to predict properties of that state.

One could wonder: what about sufficient conditions for state preparation beyond SU(2) evolutions? While we have not addressed it in this article, in future works we plan to investigate whether we can provide a sufficient condition by combining several invariants, restricting more and more the allowed linear optical evolutions.

Another future line of research is to use the invariants to bound the success probabilities of interesting heralded state preparations, like Bell state preparation. This could, in turn, be used to explore the limits of heralded and post-selected optical quantum gates \cite{knill_quantum_2002, stanisic_generating_2017}.

Finally, one could ask if this approach can be applied to active linear optics. It turns out it cannot, since the map between the quasiunitary matrices describing active linear optics \cite{leonhardt_explicit_2003} and the unitary matrices describing their quantized version is not a group homomorphism. And we really need the map $\varphi$ to be a homomorphism to derive Equation \eqref{eq: adjoint action image}. Nonetheless, in future research, we will try to circumvent this issue by decomposing the active linear optical system into a passive one with squeezing in between.

\section*{Acknowledgements}

The authors would like to thank the anonymous referees for their thorough review of the manuscript. Their insightful comments greatly improved the clarity and precision of the article.

P.~V.~P. has been funded under the UVa 2024 predoctoral contract, co-financed by Banco Santander. P.~V.~P. and J.C.~G.-E. were supported by the European Union.-Next Generation UE/MICIU/Plan de Recuperaci\'on, Transformaci\'on y Resiliencia/Junta de Castilla y Le\'on, and the Department of Education, Junta de Castilla y León, and FEDER Funds (reference: CLU-2023-1-05).
V.~G. has been supported in part  by the project AICO/2023/035 funded by Conselleria de Educaci\'o, Cultura, Universitats i Ocupaci\'o. J.~J.~M.~F. was partially supported by grants PID2022-138906NB-C22 funded by MCIN/AEI/10.13039/501100011033 (Spain), and by the ``European Union NextGenerationEU/PRTR'', as well as by Universitat Jaume I, Spain, grants UJI-B2021-02, GACUJIMA/2023/06 and GACUJIMB/2023/03. In addition, J.C.~G.-E. received funding from the Spanish Government and FEDER grant PID2020-119418GB-I00 (MICINN) and FEDER and Junta de Castilla y Le\'on  VA184P24.





\printbibliography

\newpage

\appendix

\section{Invariant subspaces of equivariant operators}\label{appendix: invariant subspaces}

We begin this appendix by refreshing some concepts from representation theory \cite{hall_lie_2015,ragone_representation_2023}. Let $G$ be a Lie group and $V$ a vector space called the \textit{carrier space}. A representation $R:G \rightarrow  \mathrm{Aut}(V)$ is a group homomorphism from $G$ to the group of automorphisms in $V$. A subspace $W\subset V$ is said to be \textit{invariant} if $R(g)W\subset W$ for all $g\in G$. If a representation doesn't have any proper non-trivial invariant subspaces, it is called \textit{irreducible}. We will call a subspace irreducible if the subrepresentation induced in that subspace is irreducible.

A map $T: V \rightarrow V$ is said to be \textit{equivariant} if
\begin{equation}
    T(R(g)v) = R(g)T(v) \:, \quad\: \text{for all} \:\: v\in V\:, \:\: g\in G \:.
\end{equation}
If an equivariant operator is also linear and self-adjoint, its eigenspaces will be invariant under the representation: let $v_\lambda$ be an eigenvector with eigenvalue $\lambda$,
\begin{equation}
    T(R(g)v_\lambda) = R(g)T(v_\lambda) = R(g) \lambda v_\lambda \:.
\end{equation}
Hence, $R(g)v_\lambda$ is also an eigenvector with eigenvalue $\lambda$.

In fact, we can characterize any linear and self-adjoint equivariant operator, $T$, by decomposing $V$ into irreducible subspaces: $V=\bigoplus_{\ell=0}^n W_\ell$. By Schur's lemma \cite[Theorem 4.29]{hall_lie_2015}, we know that any equivariant linear operator in an irreducible subspace must be a multiple of the identity. In our case, if $\Pi_\ell$ is the projector to $W_\ell$, the restriction of $T: V \rightarrow V$ to $W_\ell$ is
\begin{equation}
    T_\ell := \Pi_\ell \circ T \circ \Pi_\ell = \alpha_\ell \,\mathcal{I}_{W_\ell}  \:,
\end{equation}
for some $\alpha_\ell \in \mathbb{C}$ (by Schur's lemma). In fact, $\alpha_\ell \in \mathbb{R}$ because $T$ is self-adjoint. Therefore, any linear, self-adjoint and equivariant operator must be a linear combination of projectors onto irreducible subspaces
\begin{equation}
    T = \sum_{\ell=0}^n \alpha_\ell \Pi_\ell \:, \qquad \alpha_\ell \in \mathbb{R} \:.
\end{equation}
Consequently, any eigenspace of $T$ must be a direct sum of irreducible subspaces.

\subsection*{Example}

The previous argument provides a recipe for computing invariant subspaces (not necessarily irreducible) for the linear optical representation \eqref{eq: linear optical ad rep}, $\mathrm{Ad}_{\varphi(S)}$. It is an alternative to computing the decomposition into irreducible subspaces (see Section \ref{sec: irreducible representations}) via the Clebsch-Gordan coefficients \cite{arienzo_bosonic_2025, wilkens_benchmarking_2024}.

Recall that we already saw an equivariant linear operator for $\mathrm{Ad}_{\varphi(S)}$ in Proposition \ref{prop: higher spectral invariant image}:\\ $P_k: \uM \rightarrow \uM$ given by
\begin{equation*}
        P_k(\rho) = \sum_{i_1, \ldots, i_k}  {\rm tr}(O_{i_1} \cdots O_{i_k}  \rho) O_{i_1} \cdots O_{i_k} \:.
\end{equation*}
We can check that it is also self-adjoint for the Hilbert-Schmidt inner product:
\begin{equation}
    \begin{aligned}
    \langle \sigma, P_k(\rho)\rangle = &\left\langle \sigma, \sum_{i_1, \ldots, i_k} \tr{ O_{i_1} \cdots O_{i_k}\rho} O_{i_1} \cdots O_{i_k} \right\rangle  = \\[2mm]
    &\sum_{i_1, \ldots, i_k} \tr{\sigma O_{i_1} \cdots O_{i_k}} {\rm tr}(O_{i_1} \cdots O_{i_k} \rho) = \\[1mm]
    &\left\langle \sum_{i_1, \ldots, i_k} \tr{ O_{i_1} \cdots O_{i_k}\sigma} O_{i_1} \cdots O_{i_k}, \rho \right\rangle = \langle P_k(\sigma), \rho\rangle \:.
    \end{aligned}
\end{equation}
Therefore, $\rho$ can be projected onto each eigenspace, $V_\lambda$, of $P_k$:
\begin{equation}
    \Pi_\lambda \rho = \sum_i \tr{v_i^\lambda \rho} v_i^\lambda \:,
\end{equation}
where $\Pi_\lambda$ is the projector onto $V_\lambda$ and $\{v_i^\lambda\}$ is an orthonormal eigenbasis of $V_\lambda$. Since the subspace is invariant,
\begin{equation}
    \Pi_\lambda\mathrm{Ad}_{\varphi(S)}(\rho) =
    \mathrm{Ad}_{\varphi(S)}(\Pi_\lambda\rho) \:,
\end{equation}
we conclude that the spectrum of $\Pi_\lambda\rho$ is an invariant under linear optical evolution.

In \cite{notebook}, we computed numerically the eigenspaces of $P_k$ and showed the linear optical invariance of the spectrum of each $\Pi_\lambda\rho$. This set of invariants would detect, at least, all forbidden evolutions detectable by the spectrum of $P_k$. In fact, numerical experiments show that there are $n+1$ invariant eigenspaces of $P_k$, which means these are exactly the $n+1$ irreducible subspaces for the $n$-photon representation $\mathrm{Ad}_{\varphi(S)}$, providing an alternative means for computing them without the Clebsch-Gordan coefficients (but not necessarily faster). This explains why the spectrum of $P_k(\rho)$ for $k=2$ is able to discard impossible state preparations that the spectrum of $P_1(\rho)=\rho_T\in\imum$ is unable to: because it is also accessing information in the perpendicular subspace $\imum^\perp$.

\end{document}